\DeclareMathOperator{\cexp}{cexp}
\DeclareMathOperator{\pexp}{pexp}
\DeclareMathOperator{\rtc}{RTC}
\DeclareMathOperator{\rt}{RT}
\DeclareMathOperator{\pref}{pref}
\DeclareMathOperator{\suf}{suff}
\newcommand{\cons}{22}
\title{Repetition Avoidance in Circular Factors}
\author{Hamoon Mousavi \and Jeffrey Shallit}
\institute{School of Computer Science, University of Waterloo,
Waterloo, ON  N2L 3G1 Canada  \\
\email{ \{sh2mousa,shallit\}@uwaterloo.ca}
}
\begin{document}

\maketitle

\begin{abstract}
We consider the following novel variation on a classical avoidance problem from
combinatorics on words:  instead of avoiding repetitions in all factors
of a word, we avoid repetitions in all factors where each individual
factor is considered as a ``circular word'', i.e., the end of the word wraps around to 
the beginning.  We determine the best
possible avoidance exponent for alphabet size $2$ and $3$, and provide
a lower bound for larger alphabets.
\end{abstract}

\section{Introduction}

Repetition in words is an active research topic and has
been studied for over a hundred years.
For example, Axel Thue \cite{thue06,thue12} constructed an infinite
word over a three-letter alphabet that contains no squares (i.e., no 
nonempty word of the form $xx$),
and another infinite word over a two-letter
alphabet that contains no cubes (i.e., no nonempty word of the form $xxx$).

In 1972, Dejean refined these results by considering fractional powers.
An {\em $\alpha$-power} for a rational number $\alpha \geq 1$ is a word of
the form $w=x^{\lfloor \alpha \rfloor}x'$,
where $x'$ is a (possibly empty) prefix of $x$ and $|w|=\alpha|x|$.
The word $w$ is a {\em repetition},
with a {\em period} $x$ and an {\em exponent} $\alpha$.  Among all
possible exponents, we let $\exp(w)$ denote the largest one, corresponding
to the shortest period.
For example, the word {\tt alfalfa} has shortest
period {\tt alf} and exponent $7\over 3$.    The {\em critical exponent}
of a word $w$ is the supremum, over all factors $f$ of $w$, of 
$\exp(f)$. We write it as $\exp(w)$.

For a real number $\alpha$, an {\em $\alpha^+$-power} is a $\beta$-power where $\beta > \alpha$.
For example $ababa=(ab)^{\frac{5}{2}}$ is a $2^+$-power. 
A word $w$ is
\begin{itemize}
\item {\em $\alpha^+$-power-free}, if none of the factors of
$w$ is an $\alpha^+$-power; 
\item {\em $\alpha$-power-free} if,
in addition to being $\alpha^+$-power-free, the word $w$ has no
factor that is an $\alpha$-power.   
\end{itemize}
We also say that 
$w$ {\it avoids} $\alpha^+$-powers (resp.,
avoids $\alpha$-powers).
Dejean asked, what is the smallest real number $r$ for which there exist infinite $r^+$-power-free 
words over an alphabet of size $k$?  This quantity is called the
{\it repetition threshold} \cite{brandenburg83}, and is denoted by $\rt(k)$. From results of Thue
we know that $\rt(2) = 2$.
Dejean \cite{dejean72} proved $\rt(3)= \frac{7}{4}$, and
conjectured that
\[
 \rt(k) =
 \begin{cases}
 	\frac{7}{5}, & \text{if } k = 4;\\
 	\frac{k}{k-1}, & \text{if $k > 4$}.
 \end{cases}
\]
This conjecture received much attention in the last forty years, and
its proof was recently completed by
Currie and Rampersad \cite{currie10} and Rao \cite{rao09}, 
independently, based on work of Carpi \cite{carpi} and others.

We consider the following novel variation on Dejean, which we call
``circular $\alpha$-power avoidance''.
We consider each finite factor $x$ of a word $w$, but interpret such a factor
as a ``circular'' word, where the end of the word wraps around to 
the beginning.  Then we consider each factor $f$ of this interpretation
of $x$; for $w$ to be circularly $\alpha$-power-free, each such $f$
must be $\alpha$-power-free.
For example, consider the English word $w = {\tt dividing}$
with factor $x = {\tt dividi}$.  The circular shifts of $x$ 
are
$${\tt dividi, ividid, vididi, ididiv, didivi, idivid},$$
and (for example) the word {\tt ididiv} contains a factor {\tt ididi} that is a
${5 \over 2}$-power.
In fact, $w$ is circularly cubefree and circularly
$({5 \over 2})^+$-power-free.

To make this more precise, we recall the notion of conjugacy.
Two words $x, y$ are {\it conjugate} if one is a cyclic shift of the
other; that is, if there exist words $u, v$ such
that $x = uv$ and $y = vu$.

\begin{definition}
Let $w$ be a finite or infinite word.
The largest circular $\alpha$-power
in a word $w$ is defined to be the supremum of $\exp(f)$ over all factors $f$
of conjugates of factors of $w$.  We write it as $\cexp(w)$.
\label{done}
\end{definition}

Although Definition~\ref{done} characterizes the subject of this paper,
we could have used a different definition, based on the following.

\begin{proposition}
Let $w$ be a finite word or infinite word.
The following are equivalent:
\begin{itemize}

\item[(a)] $s$ is a factor of a conjugate of a factor of $w$;

\item[(b)] $s$ is a prefix of a conjugate of a factor of $w$;

\item[(c)] $s$ is a suffix of a conjugate of a factor of $w$;

\item[(d)] $s = vt$ for some factor $tuv$ of $w$.
\end{itemize}
\label{pone}
\end{proposition}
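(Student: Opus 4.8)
The plan is to establish the four equivalences by a short cycle of implications, treating (a) $\Rightarrow$ (d) as the one substantive step and the rest as routine. First I would dispose of (b) $\Rightarrow$ (a) and (c) $\Rightarrow$ (a) immediately: a prefix (respectively, suffix) of a word is in particular a factor of it, so if $s$ is a prefix or suffix of a conjugate of a factor of $w$, it is certainly a factor of that same conjugate, giving (a). Hence it suffices to prove the chain (a) $\Rightarrow$ (d), (d) $\Rightarrow$ (b), and (d) $\Rightarrow$ (c); combined with the two trivial implications, every one of (a)--(d) then implies every other.

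For (a) $\Rightarrow$ (d), the key observation is that any conjugate of a factor $z$ of $w$ has the form $ba$ where $z = ab$. So assume $s$ is a factor of such a $ba$, say $ba = p\,s\,q$. The decisive case analysis is on where $s$ sits relative to the ``seam'' between $b$ and $a$. If $s$ lies entirely inside $b$ or entirely inside $a$, then $s$ is a factor of $z$, hence a factor of $w$, and we may take $t = s$ and $u = v = \varepsilon$, so that $tuv = s$ is a factor of $w$ and $vt = s$. Otherwise $s$ straddles the seam, so that $s = vt$ where $v$ is a nonempty suffix of $b$ and $t$ is a nonempty prefix of $a$. Writing $a = t\,a'$ and $b = b'\,v$, the factor $z = ab = t\,a'\,b'\,v$ of $w$ has exactly the shape $tuv$ with $u = a'b'$, and $vt = s$ as required. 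This yields (d).

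Finally, (d) $\Rightarrow$ (b) and (d) $\Rightarrow$ (c) are obtained by exhibiting explicit conjugates of the factor $tuv$. Writing $tuv = (tu)v$, its conjugate $v(tu) = vtu$ has $vt = s$ as a prefix, giving (b); writing instead $tuv = t(uv)$, its conjugate $(uv)t = uvt$ has $vt = s$ as a suffix, giving (c).

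I expect the only genuine obstacle to be the bookkeeping in (a) $\Rightarrow$ (d): one must verify that ``factor of a conjugate of a factor'' really does reduce to the two-block picture $ab \mapsto ba$, so that the seam is well defined, and then confirm that the degenerate cases (where $s$ does not cross the seam, or where one of $t, u, v$ is empty) are correctly absorbed by allowing empty words in (d). Everything else is a direct substitution.
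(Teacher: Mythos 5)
Your proof is correct, but it is organized differently from the paper's. The paper proves the single cycle (a) $\Rightarrow$ (b) $\Rightarrow$ (c) $\Rightarrow$ (d) $\Rightarrow$ (a), so each of the three forward steps carries its own small conjugation argument and its own case split (for instance, its (b) $\Rightarrow$ (c) and (c) $\Rightarrow$ (d) each distinguish whether the prefix/suffix crosses the seam between the two blocks of the conjugate). You instead use a hub-and-spoke structure: you get (b) $\Rightarrow$ (a) and (c) $\Rightarrow$ (a) for free from the fact that prefixes and suffixes are factors, concentrate all of the seam case analysis into the single implication (a) $\Rightarrow$ (d), and then derive (d) $\Rightarrow$ (b) and (d) $\Rightarrow$ (c) by exhibiting the two conjugates $vtu$ and $uvt$ of $tuv$ (the first of these is exactly the paper's construction for (d) $\Rightarrow$ (a)). The implication graph you build is strongly connected, your case analysis in (a) $\Rightarrow$ (d) is exhaustive (entirely inside one block, handled by $t=s$, $u=v=\varepsilon$; or straddling the seam, handled by $u=a'b'$), and the degenerate cases are absorbed correctly. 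What your route buys is economy and emphasis: it isolates (d) as the ``normal form'' for circular factors and reduces the equivalence to one substantive step plus routine observations, whereas the paper's cycle keeps every individual step very short at the cost of repeating the seam analysis in several places. Both arguments are elementary and of comparable total length.
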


\begin{proof}
(a) $\implies$ (b):  
Suppose $s = y'' x'$, where $xy$ is a factor of
$w$ and $x = x' x''$ and $y = y' y''$.
Another conjugate of $xy$ is then $y'' x' x'' y'$ with prefix
$y'' x'$.

(b) $\implies$ (c):  Such a prefix $s$ is either of the form
$y'$ or $y x'$, where $xy$ be a factor of $w$
and $x = x' x''$ and $y = y' y''$.  
Considering the conjugate $y'' x  y'$ of $yx$, we get a suffix $y'$,
and consider the conjugate $x'' y x'$ we get a suffix $y x'$.

(c) $\implies$ (d):  Such a suffix $s$ is either of the form
$s = x''$ or $s = y'' x$, where
$xy$ be a factor of $w$
and $x = x' x''$ and $y = y' y''$. 
In the former case, let $t = x''$, $u = v = \epsilon$.  In the latter
case, let $t = x$, $u = y'$, and $v = y''$.  

(d) $\implies$ (a):   Let $tuv$ be a factor of $w$.  Then
$vtu$ is a conjugate of $tuv$, and $vt$ is a factor of it.
\qed
\end{proof}

Let $\Sigma_k = \{0,1,\ldots,k-1\}$. Define $\rtc(k)$, the {\em repetition
threshold for circular factors}, to be the smallest real number $r$ for
which there exist infinite circularly $r^+$-power-free words in
$\Sigma_k$. Clearly we have 
$$\rtc(k) \geq \rt(k).$$
In this paper we prove that $\rtc(2) = 4$ and $\rtc(3) = \frac{13}{4}$. For larger alphabets, we conjecture
that
\[
 \rtc(k) =
 \begin{cases}
 	\frac{5}{2}, & \text{if } k = 4;\\
 	\frac{105}{46}, & \text{if $k = 5$};\\
 	\frac{2k-1}{k-1}, & \text{if $k \geq 6$}.\\
 \end{cases}
\]

Finally, we point out that the quantities we study here are {\it not\/}
closely related to the notion of {\it avoidance in circular words},
studied previously in \cite{aberkane,gorbunova,harju}.

\bigskip

\noindent{\it Acknowledgments.}  We thank the referees for their careful reading of this paper.

\section{Notation}

For a finite alphabet $\Sigma$, let $\Sigma^*$ denote the set of finite
words over $\Sigma$. Let $\Sigma^\omega$ denote the set of right
infinite words over $\Sigma$, and let $\Sigma^\infty=\Sigma^\omega\cup
\Sigma^*$. Let $w = a_0a_1\cdots \in \Sigma^\infty$ be a word. Let
$w[i] = a_i$, and let $w[i..j] = a_i\cdots a_j$.  By convention
we have $w[i] =
\epsilon$ for $i < 0$ and $w[i..j] = \epsilon$ for $i > j$.  Note that
if $x$ is a period of $w$ and $|x|=p$, then $w[i+p] =
w[i]$ for $0 \leq i < |w|-p$.

For a word $x$, let $\pref(x)$ and $\suf(x)$, respectively, denote the
set of prefixes and suffixes of $x$. For words $x,y$, let $x \preceq y$
denote that $x$ is a factor of $y$. Let $x \preceq_p y$ (resp.,
$x \preceq_s y$) denote that $x$ is a prefix (resp., suffix) of $y$.

A morphism $h: \Sigma^* \rightarrow \Phi^*$ is said to be {\it $q$-uniform}
if $|h(a)| = q$ for all $a \in \Sigma$. A morphism is uniform if it is
$q$-uniform for some $q$. The fixed point of a morphism
$h: \Sigma^* \rightarrow \Phi^*$ starting with $a\in \Sigma$, if it
exists, is denoted by $h^\omega(a)$.

In the next section, we prove some preliminary results. We get some
bounds for $\rtc(k)$, and in particular, we prove that
$\rtc(2)=2\rt(2)=4$. In Section~\ref{sec:3letter}, we study the three-letter
alphabet, and we prove that $\rtc(3)=\frac{13}{4}$. Finally, in
Section~\ref{sec:another interpretation}, we give another interpretation for
repetition threshold for circular factors.

\section{Binary Alphabet}\label{sec:pre}
First of all, we prove a bound on $\rtc(k)$. 
\begin{theorem}\label{thm:bound}
	$1+\rt(k) \leq \rtc(k) \leq 2\rt(k)$.
\end{theorem}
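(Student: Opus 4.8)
The plan is to prove the two inequalities separately; the crux in both directions is a single structural observation that converts an ordinary repetition of period $p$ into a circular one of the same period but with one extra block (and conversely).

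For the upper bound $\rtc(k)\le 2\rt(k)$, set $\alpha=\rt(k)$ and let $w$ be any infinite $\alpha^+$-power-free word over $\Sigma_k$ (one exists by the definition of $\rt(k)$), so that $\exp(g)\le\alpha$ for every factor $g$ of $w$. I will show $\cexp(w)\le 2\alpha$. Consider an arbitrary circular factor; by Proposition~\ref{pone}(d) I may write it as $s=vt$ with $tuv$ a factor of $w$, and let $p$ be the shortest period of $s$. The key observation is that $v$ is a prefix of $s$ and $t$ is a suffix of $s$, so each inherits the period $p$; moreover $v$ is a suffix of $tuv$ and $t$ is a prefix of $tuv$, so both $v$ and $t$ are genuine factors of $w$. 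Hence $|v|/p\le\exp(v)\le\alpha$ and $|t|/p\le\exp(t)\le\alpha$ (the cases $|v|<p$ or $|t|<p$ being trivial since $\alpha\ge 1$), whence $|s|=|v|+|t|\le 2\alpha p$ and $\exp(s)=|s|/p\le 2\alpha$. Thus $w$ is circularly $(2\alpha)^+$-power-free and $\rtc(k)\le 2\rt(k)$.

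For the lower bound $\rtc(k)\ge 1+\rt(k)$, I will argue contrapositively: if $w$ is an infinite circularly $r^+$-power-free word over $\Sigma_k$, then $w$ is (ordinarily) $(r-1)^+$-power-free, which forces $r-1\ge\rt(k)$. The mechanism is the reverse of the one above. Suppose $w$ had an ordinary factor $f$ of shortest period $p$ with $\exp(f)>r-1$, and write $z=f[0..p-1]$, so $z$ is a prefix of $f$ and the word $zf$ has period $p$ and length $|f|+p>rp$, i.e. $\exp(zf)>r$. If I can realize $zf$ as a circular factor $vt=zf$ (take $v=z$, $t=f$), this contradicts circular $r^+$-power-freeness; by Proposition~\ref{pone}(d) it suffices to find a factor $f\,u\,z$ of $w$, that is, an occurrence of $z$ somewhere to the right of an occurrence of $f$.

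Securing that occurrence is the main obstacle, since an arbitrary infinite word need not repeat $f$ at all. I will remove it by a standard reduction: circular $r^+$-power-freeness depends only on the factor set and is preserved under passing to the shift-orbit closure, so I may replace $w$ by a uniformly recurrent word with the same or fewer factors without increasing $\cexp$. In a recurrent word every factor, in particular $f$, occurs at least twice, say $f\,u'\,f$ is a factor; since $z$ is a prefix of the second copy of $f$, the factor $f\,u'\,z$ exists and yields the desired circular $r^+$-power. Hence no $(r-1)^+$-power can occur, $w$ is $(r-1)^+$-power-free, and by the definition of the repetition threshold $r-1\ge\rt(k)$. Taking the infimum over admissible $r$ gives $\rtc(k)\ge 1+\rt(k)$. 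I expect the recurrence reduction to be the only delicate point; the two period-counting estimates are routine once the prefix/suffix bookkeeping furnished by Proposition~\ref{pone} is in place.
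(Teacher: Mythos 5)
Your upper bound is essentially the paper's own argument, just written out more explicitly: the paper takes $xty \preceq w$ with $yx$ a $(2r)^+$-power and observes that one of $x,y$ would then be an $r^+$-power, which is exactly your ``split $s=vt$, both parts inherit the period and both are genuine factors'' computation. The lower bound, however, takes a genuinely different route. The paper argues directly that \emph{every} infinite word over $\Sigma_k$ contains a circular $(1+r)$-power, where $r=\rt(k)$: letting $l$ be the length of the longest $r$-power-free finite word, it applies pigeonhole to the factors of length $l+1$ to find a factor $f$, containing an $r$-power $z^r$, that occurs infinitely often; this yields a factor $z^rtz$ and hence a circular power $z^{1+r}$. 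That argument is elementary, but it silently relies on the fact that $r$-power-free words over $\Sigma_k$ have bounded length, i.e., that powers of exponent exactly $\rt(k)$ are unavoidable --- a true and well-known refinement, but not something contained in the bare definition of $\rt(k)$ as an infimum. Your contrapositive avoids that fact entirely: you pass to a uniformly recurrent word in the shift-orbit closure (factor sets only shrink, so circular $r^+$-power-freeness is preserved), and then upgrade any ordinary factor $f$ of exponent $>r-1$ to the circular power $zf$ of exponent $>r$, using recurrence to place a copy of the period-prefix $z$ after $f$. The price is importing the standard compactness/Zorn fact that orbit closures contain minimal (hence uniformly recurrent) points --- heavier machinery than pigeonhole --- and the conclusion is slightly weaker (you get $\rtc(k)\geq 1+\rt(k)$, not the unavoidability of circular $(1+r)$-powers in every word), though that is all the theorem needs. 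Two small repairs to your write-up: in the recurrent word you should take two \emph{non-overlapping} occurrences of $f$ (they exist because $f$ occurs infinitely often) before asserting $fu'z \preceq w$; and your opening claim that the circularly $r^+$-power-free word $w$ itself is $(r-1)^+$-power-free is not literally true for non-recurrent $w$ --- your own reduction is what fixes this, so the $(r-1)^+$-power-freeness should be asserted only for the recurrent replacement.
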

\begin{proof}
	Let $r = \rt(k)$. We first prove that $\rtc(k) \leq 2r$. Let $w\in \Sigma_k^\omega$ be an $r^+$-power-free word. We prove that $w$ is circularly $(2r)^+$-power-free. Suppose that $xty \preceq w$, such that $yx$ is $(2r)^+$-power. Now either $y$ or $x$ is an $r^+$-power. This implies that $w$ contains an $r^+$-power, a contradiction.
	
	Now we prove that $1+r \leq \rtc(k)$. Let $l$ be the length of the longest $r$-power-free word over $\Sigma_k$, and let $w\in\Sigma_k^\omega$. Considering the factors of length $n = l+1$ of $w$, we know some factor $f$ must occur infinitely often. This $f$ contains an $r$-power: $z^r$. Therefore $z^rtz$ is a factor of $w$. Therefore $w$ contains a circular $(1+r)$-power. This proves that $1+r \leq \rtc(k)$.
	\qed
\end{proof}

Note that since $\rt(k) > 1$, we have $\rtc(k) > 2$. 

\begin{lemma}\label{lemma:2letter}
	$\rtc(2) \geq 4$.
\end{lemma}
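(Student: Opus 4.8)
The plan is to exhibit, in every infinite binary word $w$, a circular $4$-power (that is, a circular factor whose exponent is at least $4$). Since $4 > r$ for every $r < 4$, this shows that no infinite binary word is circularly $r^+$-power-free for any $r < 4$, and hence $\rtc(2) \geq 4$.

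First I would invoke the classical fact that the longest square-free binary words have length $3$, so that every factor of $w$ of length $4$ contains a square $s = zz$. Next comes a pigeonhole step: there are at most $2^4$ distinct binary words of length $4$, whereas $w$ has infinitely many positions, so some length-$4$ factor $g$ of $w$ occurs at infinitely many positions. I fix, once and for all, a square $s = zz$ occurring inside $g$ at some offset.

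Then I would select two occurrences of $g$, at positions $i < j$ with $j - i \geq 4$; this is possible because $g$ recurs at positions tending to infinity. The two copies of $g$, and therefore the two corresponding copies of $s$, are disjoint, so $w$ contains a factor of the form $s\,u\,s$ for some (possibly empty) word $u$. Taking $t = s$ and $v = s$ with middle block $u$, Proposition~\ref{pone}(d) applies to the factor $tuv = sus \preceq w$ and yields that $vt = ss = z^4$ is a factor of a conjugate of a factor of $w$, i.e.\ a circular factor. Since $z^4$ has exponent at least $4$, we conclude $\cexp(w) \geq 4$, as desired.

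The main obstacle is ensuring that the two occurrences of the square are disjoint, so that $s\,u\,s$ is a genuine factor to which Proposition~\ref{pone} can be applied; this is exactly what the separation $j - i \geq 4$ of the recurrent factor $g$ guarantees. The remaining ingredients—the length-$3$ bound on square-free binary words and the passage from $sus$ to the circular $4$-power via Proposition~\ref{pone}—are routine.
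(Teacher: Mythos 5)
Your proof is correct, but it takes a genuinely different route from the paper's. The paper argues by a dichotomy: either $00$ or $11$ occurs infinitely often in $w$, giving a factor of the form $aayaa$ and hence a circular $4$-power $a^4$, or else $w$ ends in a suffix $(01)^\omega$, where circular $4$-powers are immediate. You replace this case analysis with a pigeonhole argument: some length-$4$ factor $g$ recurs infinitely often, $g$ must contain a square $s=zz$ because square-free binary words have length at most $3$, and two disjoint occurrences of $g$ produce a factor $sus \preceq w$, so that Proposition~\ref{pone}, (d) $\Rightarrow$ (a), turns $ss = z^4$ into a circular factor. The final wrap-around step is the same in both proofs; the difference is how the recurrent square is found, and your version, while slightly longer, is more uniform. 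It is also worth noting that your argument strengthens the general lower-bound technique of Theorem~\ref{thm:bound}: there, a recurrent factor containing an $r$-power $z^r$ is wrapped around to give only $z^r t z$, a circular $(1+r)$-power, whereas you wrap a full second copy of the square, obtaining a circular $2r$-power. Specialized to $k=2$, $r=2$, the Theorem~\ref{thm:bound} argument yields only $\rtc(2) \geq 3$, while your pigeonhole variant attains the optimal $\rtc(2) \geq 4$ — matching what the paper's ad hoc binary case analysis achieves.
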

\begin{proof}
Let $w\in \Sigma_2^\omega$ be an arbitrary word. It suffices to prove
that $w$ contains circular $4$-powers. There are two cases: either $00$
or $11$ appears infinitely often, or $w$ ends with a suffix of the form
$(01)^\omega$. In the latter case, obviously there are circular
$4$-powers; in the former there are words of the form $aayaa$ for $a
\in \Sigma_2$ and $y \in \Sigma_2^*$ and hence circular $4$-powers.
\qed
\end{proof}

\begin{theorem}
	$\rtc(2)=4$. 
\end{theorem}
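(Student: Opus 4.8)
The plan is simply to sandwich $\rtc(2)$ between the two bounds already established, and to observe that for $k=2$ they coincide exactly at $4$. The lower bound $\rtc(2)\geq 4$ is precisely Lemma~\ref{lemma:2letter}, so nothing further is required on that side. For the upper bound I would invoke Theorem~\ref{thm:bound}, which gives $\rtc(2)\leq 2\rt(2)$; since Thue's construction yields $\rt(2)=2$, this reads $\rtc(2)\leq 4$. Combining the two inequalities immediately forces $\rtc(2)=4$.

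To make the upper bound concrete rather than purely threshold-theoretic, I would exhibit an explicit witness. The Thue--Morse word $\mathbf{t}=h^\omega(0)$ with $h(0)=01$ and $h(1)=10$ is overlap-free, hence $2^+$-power-free, so it is a legitimate choice for the word $w$ in the proof of Theorem~\ref{thm:bound} with $r=2$. Applying that argument to $w=\mathbf{t}$: if some circular factor $yx$ (arising from a factor $xty\preceq\mathbf{t}$, in the sense of Proposition~\ref{pone}(d)) were a $4^+$-power, then one of $x,y$ would be a $2^+$-power occurring in $\mathbf{t}$, contradicting overlap-freeness. Thus $\mathbf{t}$ is circularly $4^+$-power-free, which re-confirms $\rtc(2)\leq 4$ via an actual infinite word.

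There is no genuine obstacle here: once Theorem~\ref{thm:bound} and Lemma~\ref{lemma:2letter} are in hand, the theorem follows at once. The only point demanding any care is the numerical bookkeeping that $2\rt(2)=4$ matches the lower bound of Lemma~\ref{lemma:2letter}. As a bonus, the two matching bounds show that the witness $\mathbf{t}$ attains $\cexp(\mathbf{t})=4$ exactly, so the estimate $\rtc(2)\leq 4$ cannot be improved and the value $4$ is tight.
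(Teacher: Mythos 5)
Your proposal is correct and takes essentially the same route as the paper, which proves the theorem as a direct consequence of Theorem~\ref{thm:bound} (giving $\rtc(2) \leq 2\rt(2) = 4$) together with Lemma~\ref{lemma:2letter} (giving $\rtc(2) \geq 4$). Your added remark exhibiting the Thue--Morse word as an explicit circularly $4^+$-power-free witness also mirrors the paper, which makes the same observation immediately after the theorem.
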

\begin{proof}
	A direct consequence of Theorem~\ref{thm:bound} and Lemma~\ref{lemma:2letter}. 
	\qed
\end{proof}

The Thue-Morse word is an example of a binary word that avoids
circular $4^+$-powers.

\section{Ternary Alphabet}\label{sec:3letter}
Our goal in this section is to show that $\rtc(3)=\frac{13}{4}$. For
this purpose, we frequently use the notion of synchronizing morphism, which was
introduced in Ilie et al.~\cite{ochem05}.

\begin{definition}
A morphism $h:\Sigma^* \rightarrow \Gamma^*$ is said to be {\em
synchronizing} if for all $a,b,c \in \Sigma$ and $s,r \in \Gamma^*$, if
$h(ab) = rh(c)s$, then either $r = \epsilon$ and $a = c$ or $s =
\epsilon$ and $b = c$.
\end{definition}

\begin{definition}
A synchronizing morphism $h:\Sigma^* \rightarrow \Gamma^*$ is said to
be {\em strongly synchronizing} if for all $a,b,c \in \Sigma$, if $h(c)
\in \pref(h(a))\suf(h(b))$, then either $c = a$ or $c = b$.
\end{definition}

The following technical lemma is applied several times throughout the paper.
\begin{lemma}\label{lemma:technicalLemma}
	Let $h:\Sigma^* \rightarrow \Gamma^*$ be a synchronizing $q$-uniform morphism. Let $n > 1$ be an integer, and let $w \in \Sigma^*$. If $z^n \preceq_p h(w)$ and $|z| \geq q$, then $u^n \preceq_p w$ for some $u$. Furthermore $|z| \equiv 0$ (mod $q$). 
\end{lemma}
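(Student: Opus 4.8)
The plan is to prove the divisibility statement $|z|\equiv 0\pmod q$ first, since once we know $|z|$ is a multiple of $q$ the existence of $u$ is essentially automatic. To set up, I would write $w = a_0a_1\cdots a_{m-1}$ as a sequence of letters, so that $h(w)=h(a_0)h(a_1)\cdots h(a_{m-1})$ decomposes into $m$ consecutive blocks of length $q$, with block $i$ occupying positions $iq$ through $iq+q-1$. Put $p=|z|$. Because $z^n\preceq_p h(w)$ and $n>1$, the length-$np$ prefix of $h(w)$ has period $p$, i.e.\ $h(w)[x]=h(w)[x+p]$ for $0\le x\le (n-1)p-1$. Since $p\ge q$ and $n\ge 2$, this range contains every $x\in[0,q-1]$, so the first block reappears shifted by $p$: $h(a_0)=h(w)[p\,..\,p+q-1]$.

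The crux is to show this forces $p$ to be a multiple of $q$. Write $p=kq+j$ with $0\le j<q$, and suppose for contradiction that $j\ne 0$. Then the occurrence of $h(a_0)$ at position $p$ straddles blocks $k$ and $k+1$, and since that occurrence stays inside $h(w)$ one checks $k+1\le m-1$, so $a_k$ and $a_{k+1}$ are genuine letters of $w$. This yields a decomposition $h(a_ka_{k+1})=r\,h(a_0)\,s$ with $|r|=j$ and $|s|=q-j$. Because $0<j<q$, both $r$ and $s$ are nonempty, which directly contradicts the synchronizing property of $h$. Hence $j=0$ and $|z|=p\equiv 0\pmod q$.

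To finish, write $p=dq$ with $d=p/q\ge 1$. The prefix $z$ of $h(w)$ of length $dq$ is exactly the image of the first $d$ letters of $w$, so $z=h(u)$ with $u=a_0\cdots a_{d-1}$, and therefore $z^n=h(u)^n=h(u^n)\preceq_p h(w)$. To cancel $h$ I would use that $h$ is injective on letters, a fact that itself follows from synchronization: if $h(a)=h(b)$ with $a\ne b$, then reading $h(ab)=h(a)\,h(a)$ as $r\,h(c)\,s$ with $r=h(a)\ne\epsilon$, $c=a$, and $s=\epsilon$ violates the synchronizing condition, since $b\ne c$. A $q$-uniform morphism injective on letters is injective on words, so from $h(u^n)=h(a_0\cdots a_{nd-1})$ (both being the length-$ndq$ prefix of $h(w)$) we conclude $u^n=a_0\cdots a_{nd-1}\preceq_p w$.

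I expect the only real obstacle to be the divisibility step: one has to locate precisely where the repeated first block lands among the image blocks, handle the boundary case $j=0$ separately, and confirm that the two blocks $h(a_k),h(a_{k+1})$ genuinely lie within $h(w)$ so that the synchronizing instance is legitimate. Everything after $|z|\equiv 0\pmod q$ is bookkeeping, modulo the small injectivity observation, which could be isolated as its own remark if a cleaner exposition is wanted.
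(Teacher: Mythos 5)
Your proof is correct and takes essentially the same route as the paper: both locate a repeated copy of the first block at offset $j$ from a block boundary and apply the synchronizing property to $h(a_ka_{k+1}) = r\,h(a_0)\,s$ with $r,s$ both nonempty to force $j=0$, then read off $z=h(u)$ and conclude. The differences are only presentational --- the paper writes $z=h(u)z'$ and synchronizes on $z'h(u[0]) \preceq_p h(w[|u|..|u|+1])$, and it leaves implicit the letter-injectivity observation (needed to cancel $h$) that you spell out explicitly.
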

\begin{proof}
	Let $z = h(u)z'$, where $|z'| < q$ and $u \in \Sigma^*$. Note that $u \neq \epsilon$, since $|z| \geq q$. Clearly, we have $z'h(u[0]) \preceq_p h(w[|u|..|u|+1])$. Since $h$ is synchronizing, the only possibility is that $z' = \epsilon$, so $|z| \equiv 0$ (mod $q$). Now we can write $z^n=h(u^n)\preceq_p h(w)$. Therefore $u^n \preceq_p w$.
	\qed
\end{proof}

The next lemma states that if the fixed point of a strongly
synchronizing morphism (SSM) avoids small $n$'th powers, where $n$ is an
integer, it avoids
$n$'th powers of all lengths.

\begin{lemma}\label{lemma:powersInSSM}
	Let $h: \Sigma^* \rightarrow \Sigma^*$ be a strongly synchronizing $q$-uniform morphism. Let $n > 1$ be an integer. If $h^\omega(0)$ avoids factors of the form $z^n$, where $|z^n| < 2nq$, then $h^\omega(0)$ avoids $n$'th powers.
\end{lemma}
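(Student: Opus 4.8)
The plan is to argue by contradiction through a minimal counterexample. Suppose $w := h^\omega(0)$ contains some $n$th power but, as hypothesized, contains no factor $z^n$ with $|z^n| < 2nq$. Among all $n$th powers occurring in $w$, choose one, $z^n$, whose period $p := |z|$ is as small as possible. Since $z^n$ is not one of the forbidden short powers, we have $|z^n| = np \ge 2nq$, hence $p \ge 2q$; in particular $p > q$, which is exactly what lets us invoke the synchronizing machinery. The goal is to manufacture from $z^n$ another $n$th power in $w$ whose period is strictly smaller than $p$, contradicting minimality; since $w = h(w)$, this is obtained by ``desubstituting'' $z^n$.

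First I would locate the block structure. Fix an occurrence $z^n = w[i..\,i+np-1]$ and let $t \in \{0,\dots,q-1\}$ measure the distance from $i$ to the first block boundary, so that the block boundaries inside the occurrence sit at positions $i+t+mq$. The first step is to show $q \mid p$. Pick any block boundary $b$ well inside the periodic region; then $w[b..\,b+q-1] = h(c)$ for the appropriate letter $c$, and the periodicity $w[j]=w[j+p]$ forces a second, interior occurrence of $h(c)$ at position $b+p$. A short consequence of synchronization (the factor-analogue of Lemma~\ref{lemma:technicalLemma}: an interior occurrence of some $h(c)$ must be block-aligned, for otherwise it straddles a boundary and contradicts the defining property of a synchronizing morphism) shows $q \mid (b+p)$, and since $q \mid b$ we conclude $q \mid p$. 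Write $p = qp'$ with $p' \ge 2$.

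With $q \mid p$ the desubstitution is immediate in the aligned case $t=0$: then $z^n$ occupies whole blocks, the underlying letters are periodic with period $p'$, so $z^n = h(V^n)$ for a word $V$ of length $p'$, and $V^n$ is a factor of $w$ with period $p' < p$ — the desired contradiction. The difficulty, and the heart of the argument, is the offset case $0 < t < q$. Here $z^n$ begins and ends in the interior of a block, so the occurrence covers the full blocks $a_{K+1}\cdots a_{K+np'-1}$ (which one checks are periodic with period $p'$, spelling $V^{n-1}V[0..\,p'-2]$ with $V = a_{K+1}\cdots a_{K+p'}$) together with a suffix of $h(a_K)$ of length $t$ at the front and a prefix of $h(a_{K+np'})$ of length $q-t$ at the back. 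Pulling this back naively yields only an $(n-\tfrac1{p'})$-power of $V$, one letter short of $V^n$: the fractional deficit created by the unaligned ends is precisely the main obstacle.

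Strong synchronization is exactly the tool that removes this deficit. Comparing the first and last copies of $z$ via the periodicity gives two identities: the covered suffix of $h(a_K)$ equals the length-$t$ suffix of $h(a_{K+p'})$, and the covered prefix of $h(a_{K+np'})$ equals the length-$(q-t)$ prefix of $h(a_{K+p'})$. Concatenating these two covered pieces therefore reconstructs the full block image, $h(a_{K+p'}) = \big(\text{prefix of } h(a_{K+np'})\big)\big(\text{suffix of } h(a_K)\big)$, so that $h(a_{K+p'}) \in \pref(h(a_{K+np'}))\,\suf(h(a_K))$. Strong synchronization then forces $a_{K+p'} = a_{K+np'}$ or $a_{K+p'} = a_K$; either equality supplies the one missing letter and upgrades $V^{n-1}V[0..\,p'-2]$ to a genuine $n$th power of period $p' < p$ inside $w$. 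This contradicts the minimality of $p$, so $w$ can contain no $n$th power at all, proving the lemma.
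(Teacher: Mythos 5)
Your proof is correct and follows essentially the same strategy as the paper's: a minimal counterexample is desubstituted through $h$, with synchronization forcing block-alignment of occurrences (the content of the paper's Lemma~\ref{lemma:technicalLemma}) and strong synchronization applied once to the wrap-around word --- your identity $h(a_{K+p'}) \in \pref(h(a_{K+np'}))\suf(h(a_K))$ is exactly the paper's step ``$yx = h(a)$, hence $a = a_{i-1}$ or $a = a_{j+1}$''. The differences are only bookkeeping: you minimize the period rather than the length, establish $q \mid p$ upfront, and complete the $n$th power directly at the letter level, where the paper instead passes to an aligned conjugate power and re-invokes Lemma~\ref{lemma:technicalLemma}.
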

\begin{proof}
	Let $w = a_0a_1a_2\cdots = h^\omega(0)$. Suppose $w$ has $n$'th powers of length greater than or equal to $2nq$. Let $z$ be the shortest such word, i.e., $|z^n| \geq 2nq$ and $z^n \preceq w$. We can write 
	\begin{align*}
		&z^n = xh(w[i..j])y,\\
		&x \preceq_s h(a_{i-1}),\\
		&y \preceq_p h(a_{j+1}),\\
		&|x|,|y| < q,
	\end{align*}
	for some integers $i,j \geq 0$. If $x=y=\epsilon$, then using Lemma~\ref{lemma:technicalLemma}, since $|z| \geq q$, the word $w[i..j]$ contains an $n$'th power. Therefore $w$ contains an $n$'th power of length smaller than $|z^n|$, a contradiction. Now suppose that $xy \neq \epsilon$. Since $|z| \geq \frac{2nq}{n} = 2q$, and $|xh(w[i])|,|h(w[j])y| < 2q$, we can write 
	\begin{align*}
		&xh(w[i]) \preceq_p z, \\
		&h(w[j])y \preceq_s z.
	\end{align*}
	Therefore $h(w[j])yxh(w[i]) \preceq z^2 \preceq z^n$. Since $h$ is synchronizing $$h(w[j])yxh(w[i]) \preceq h(w[i..j]).$$ Hence $yx = h(a)$ for some $a \in \Sigma$. Since $h$ is an SSM, we have either $a = a_{i-1}$ or $a = a_{j+1}$. Without loss of generality, suppose that $a=a_{i-1}$. Then we can write $h(w[i-1..j])=yxh(w[i..j])$. The word $yxh(w[i..j])$ is an $n$'th power, since it is a conjugate of $xh(w[i..j])y$. So we can write $$h(w[i-1..j]) = z_1^n$$ where $z_1$ is a conjugate of $z$. Note that $|z_1| = |z| \geq 2q$. Now using Lemma~\ref{lemma:technicalLemma}, the word $w[i-1..j]$ contains an $n$'th power, and hence $w$ contains an $n$'th power of length smaller than $|z^n|$, a contradiction. 
	\qed
\end{proof}

The following lemma states that, for an SSM $h$ and a well-chosen word $w$, all circular $(\frac{13}{4})^+$-powers in $h(w)$ are small.

\begin{lemma}\label{lemma:mainlemma}
Let $h: \Sigma^* \rightarrow \Gamma^*$ be a strongly synchronizing
$q$-uniform morphism. Let $w=a_0a_1a_2\cdots \in \Sigma^\omega$ be a
circularly cubefree word. In addition, suppose that $w$ is squarefree.
If $x_1tx_2 \preceq h(w)$ for some words $t,x_1,x_2$, and $x_2x_1$ is
a $(13/4)^+$-power, then $|x_2x_1| < \cons q$.
\end{lemma}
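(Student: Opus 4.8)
The plan is to bound the length of a circular $(13/4)^+$-power in $h(w)$ by exploiting the synchronizing and strongly synchronizing properties to pull the repetition back to the preimage $w$, where the circular cubefreeness hypothesis forbids large repetitions. Concretely, suppose $x_1tx_2 \preceq h(w)$ with $x_2x_1$ a $(13/4)^+$-power of period length $p=|x_2x_1|\cdot(4/13)^{-1}$ (so the period $z$ satisfies $|x_2x_1|>\tfrac{13}{4}|z|$), and suppose for contradiction that $|x_2x_1|\geq \cons q$. By Proposition~\ref{pone}(d), $x_2x_1$ is (a factor of) a conjugate of a factor of $h(w)$; equivalently $x_2x_1$ is itself a factor of the circular word arising from some factor $h(a_i\cdots a_j)$ of $h(w)$. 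First I would write the circular $(13/4)^+$-power as living inside the image $h(a_i a_{i+1}\cdots a_j)$ read cyclically, and locate its period $z$ together with the boundaries of the blocks $h(a_\ell)$ relative to $z$.

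The key step is to apply Lemma~\ref{lemma:technicalLemma} in the circular setting. Since $|x_2x_1|\geq \cons q$ and the exponent exceeds $13/4 > 3$, the period $z$ satisfies $|z| \geq q$ (indeed $|z| > \cons q/(13/4)$, which is comfortably at least $2q$ once $\cons$ is chosen appropriately). I would argue that the repetition, being long relative to $q$, must align with block boundaries: using synchronization exactly as in the proof of Lemma~\ref{lemma:powersInSSM}, any internal occurrence of $h(c)$ inside two consecutive image blocks forces $c$ to match an endpoint letter, and this pins $|z|\equiv 0 \pmod q$ so that $z = h(u)$ for some nonempty $u$. The circular $(13/4)^+$-power then descends to a circular power of $w$ itself: a factor of the form $u^m t' u'$ (read cyclically) witnessing that $w$ contains a circular power of exponent exceeding $3$. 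Because the exponent $13/4$ divided down through the uniform scaling by $q$ stays above $3$, the induced circular power in $w$ is a $3^+$-power, contradicting that $w$ is circularly cubefree. The squarefreeness of $w$ is the auxiliary hypothesis I would use to rule out the degenerate short-period cases where the descent yields only a square rather than a genuine cube.

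The main obstacle will be the careful bookkeeping of the boundary pieces $x$ and $y$ (the partial image blocks at the two ends of the repetition) in the circular wrap-around situation, which is genuinely more delicate than in Lemma~\ref{lemma:powersInSSM} because the repetition $x_2x_1$ straddles the junction where the end of the factor wraps to its beginning. I expect to need the strongly synchronizing hypothesis precisely here: when the period boundary falls in the interior of a block that is split across the wrap point, the condition $h(c)\in\pref(h(a))\suf(h(b))$ controls which preimage letter can occupy the seam, forcing $c\in\{a,b\}$ and thereby allowing the repetition to be realigned to block boundaries without loss. Once alignment is achieved, the descent to $w$ and the contradiction with circular cubefreeness are routine. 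The constant $\cons$ is then chosen large enough (relative to $q$ and the slack $13/4 - 3$) that the descended exponent strictly exceeds $3$; I would verify at the end that the stated value $\cons$ suffices, treating that as a concrete inequality rather than grinding it out here.
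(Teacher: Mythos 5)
Your overall strategy does match the paper's: pull the power back through $h$ via synchronization, use strong synchronization to absorb the seam where $x_2$ meets $x_1$, descend to a circular cube in $w$, and tune $\cons$ so that the exponent surviving the boundary losses stays above $3$. However, there are genuine gaps. First, your bound on the period is backwards. That $x_2x_1$ is a $(13/4)^+$-power means $x_2x_1 = z^{\alpha}$ with $\alpha > 13/4$, i.e., $|z| < \frac{4}{13}|x_2x_1|$; a \emph{higher} exponent means a \emph{shorter} period, so nothing prevents $|z|$ from being tiny, and your inference that $|z| > \cons q\cdot\frac{4}{13} \geq 2q$ does not follow. The correct fix is the reduction the paper performs at the outset: if $|z| < q$, replace $z$ by $z^k$ with $k$ minimal such that $|z^k| \geq q$; then $q \leq |z^k| < 2q$, and the exponent of $x_2x_1$ relative to the period $z^k$ is at least $\cons q/(2q) = 11 > 13/4$, so one may assume $|z| \geq q$ without loss of generality.

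Second, your argument only treats the case in which both $x_1$ and $x_2$ are long enough to contain a full image block; you never handle the case where, say, $|x_2| < 2q$. There the seam realignment is unavailable, and after discarding $x_2$ and the partial blocks $y_1,y_2$ (a loss of up to $4q$ out of $\cons q$), the surviving block-aligned factor $h(v) = z_1^{\beta}$ of $z^\alpha$ has only $\beta \geq \frac{18}{22}\cdot\frac{13}{4} > 2$, which is \emph{less} than $3$: the descent through Lemma~\ref{lemma:technicalLemma} then yields merely a square in the genuine factor $v$ of $w$. This is precisely why the lemma hypothesizes that $w$ is squarefree in addition to circularly cubefree; you instead attribute squarefreeness to ``short-period cases,'' which are in fact handled by the $z \mapsto z^k$ reduction and need no squarefreeness at all. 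Finally, the seam step you dismiss as routine is the bulk of the proof: to conclude $y_4y_1 = h(a)$ by synchronization one must first exhibit an occurrence of $h(w[j_2])\,y_4y_1\,h(w[i_1])$ inside a genuine factor $h(v_1)$ or $h(v_2)$ of $h(w)$, which requires the counting argument that $|h(v_1)| \geq q + |z^m|$ or $|h(v_2)| \geq q + |z^m|$ — another place where the hypothesis $|x_2x_1| \geq \cons q$ must be invoked quantitatively, not just ``at the end.''
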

\begin{proof}
	The proof is by contradiction. Suppose there are words $t,x_1,x_2$, and $z$ in $\Gamma^*$ and a rational number $\alpha > \frac{13}{4}$ such that 
	\begin{equation*}
		x_1tx_2 \preceq h(w)
	\end{equation*}
	\begin{equation*}
		|x_2x_1| \geq \cons q
	\end{equation*}
	\begin{equation*}
		x_2x_1 = z^\alpha.
	\end{equation*}
	
	Suppose $|z| < q$. Let $k$ be the smallest integer for which $|z^k| \geq q$. Then $|z^k| < 2q$, because otherwise $|z^{k-1}| \geq q$, a contradiction. We can write $x_2x_1 = (z^k)^\beta$, where $\beta = \frac{|x_2x_1|}{|z^k|} > \frac{22q}{2q} > \frac{13}{4}$. Therefore we can assume that $|z| \geq q$, since otherwise we can always replace $z$ with $z^k$, and $\alpha$ with $\beta$. 
		
	There are three cases to consider. 
	\begin{enumerate}
		\item[(a)]	Suppose that $x_1$ and $x_2$ are both long enough, so that each contains an image of a word under $h$. More formally, suppose that
		\begin{align}
			&x_1 = y_1h(w[i_1..j_1])y_2,\label{eq:x1}\\
			&x_2 = y_3h(w[i_2..j_2])y_4,\label{eq:x2}\\
			&i_1 \leq j_1, i_2 \leq j_2,\notag\\
			&y_1 \preceq_s h(a_{i_1-1}),\notag\\
			&y_2 \preceq_p h(a_{j_1+1}),\notag\\
			&y_3 \preceq_s h(a_{i_2-1}),\notag\\
			&y_4 \preceq_p h(a_{j_2+1}),\notag\\
			&|y_1|,|y_2|,|y_3|, \text{ and } |y_4| < q, \text{ and}\notag\\
			&y_2ty_3 = h(w[j_1+1..i_2-1]).\notag
		\end{align}
Let $v_1 = w[i_1..j_1]$ and $v_2 = w[i_2..j_2]$.
See Fig~\ref{fig1}.
	
\begin{figure}[H]
\begin{tikzpicture}[scale=0.75]
				
\begin{scope}
				\draw (3.5,.25) node[left]{$w=$}++(0,-.25) rectangle +(1,.5)
		 	  			++(1,0) rectangle node{$v_1$}+(2,.5)
	      				++(2,0) rectangle +(2.3,.5)
	      				++(2.3,0) rectangle node{$v_2$}+(2,.5)
	      				++(2,0) rectangle +(1,.5);
			\end{scope}
			
			\draw[->,semithick,dotted] (4.5,0) node[below]{$i_1$} -- (4,-1.5);
			\draw[->,semithick,dotted] (6.5,0) node[below]{$j_1$} -- (7,-1.5);
			
			\draw[->,semithick,dotted] (8.8,0) node[below]{$i_2$} -- (11,-1.5);
			\draw[->,semithick,dotted] (10.8,0) node[below]{$j_2$}-- (14,-1.5);
			
			\begin{scope}[yshift=-2cm]
				\draw (2,.25)node[left]{$h(w)=$}++(0,-.25) rectangle +(1,.5)
	      				++(1,0) rectangle node{$y_1$}+(1,.5)
 			  			++(1,0) rectangle node{$h(w[i_1..j_1])$}+(3,.5)
 			  			++(3,0) rectangle node{$y_2$}+(1,.5) 
  				  		++(1,0) rectangle node{$t$}+(2,.5)
  				  		++(2,0) rectangle node{$y_3$}+(1,.5)
  				  		++(1,0) rectangle node{$h(w[i_2..j_2])$}+(3,.5)
  				  		++(3,0) rectangle node{$y_4$}+(1,.5)
  				  		++(1,0) rectangle +(1,.5);
  				  
	   			\draw (3,-.5) rectangle node{$x_1$} (8,0);
	   			\draw (8,-.5) rectangle node{$t$} (10,0);
	   			\draw (10,-.5) rectangle node{$x_2$} (15,0);
			\end{scope}
		\end{tikzpicture}
		\caption{$x_1tx_2$ is a factor of $h(w)$} \label{fig1}
\end{figure}
	There are two cases to consider.
	\begin{enumerate}
		\item[(1)] Suppose that $y_4y_1 = \epsilon$. Let $v = w[i_2..j_2]w[i_1..j_1]$.

		The word $h(v)y_2$ is a factor of $y_3h(v)y_2 = z^\alpha$ of length $\geq 22q - q = 21q$, and so $$h(v)y_2 = z_1^\beta,$$ where $z_1$ is a conjugate of $z$, and $\beta \geq \frac{21}{22}\alpha > 3$. Therefore we can write $$z_1^3 \preceq_p h(v)y_2 \preceq_p h(vw[j_1+1]).$$ Note that $|z_1| = |z| \geq q$, so using Lemma~\ref{lemma:technicalLemma}, we can write $|z_1| \equiv 0$ (mod $q$). Therefore $$z_1^3 \preceq_p h(v).$$ 
		Using Lemma~\ref{lemma:technicalLemma} again, the word
		$v$ contains a cube, which means that the word $w$
		contains a circular cube, a contradiction.
			
		\item[(2)] Suppose that $y_4y_1 \neq \epsilon$. We show how to get two new factors $x'_1=h(v'_1)y'_2$ and $x'_2=y'_3h(v'_2)$, with $v'_1,v'_2$ nonempty, such that $x'_2x'_1=x_2x_1$. Then we use case (1) above to get a contradiction.
		
		Let $s = h(w[j_2])y_4y_1h(w[i_1])$, and let $m$ be the smallest integer for which $|z^m| \geq |s|$. Note that if $|z| < |s|$, then \begin{equation}
			|z^m| < 2|s| < 8q. \label{eq:2s8q}
		\end{equation} 
		We show that at least one of the following inequalities holds: 
		\begin{align*}
			|h(v_1)| &\geq q + |z^m|,\\
			|h(v_2)| &\geq q + |z^m|.
		\end{align*}
		Suppose that both inequalities fail. Then using (\ref{eq:x1}) and (\ref{eq:x2}) we can write
		\begin{equation}
			|x_2x_1| < 2q + 2|z^m| + |y_1y_2y_3y_4| < 6q + 2|z^m|. \label{eq:x2x1}
		\end{equation} 
		If $|z| < |s|$, then by (\ref{eq:2s8q}) and (\ref{eq:x2x1}) one gets $|x_2x_1| < 22q$, contradicting our assumption. Otherwise $|z| \geq |s|$, and hence $m=1$. Then $$|x_2x_1| = \alpha|z| < 2q + 2|z| + |y_1y_2y_3y_4| < 6q + 2|z|,$$ and hence $|z| < 6q$. So $|x_2x_1| < 6q + 2|z| < 18q$, contradicting our assumption. Without loss of generality, suppose that $|h(v_1)| \geq q + |z^m|$.
		
		Using the fact that $z$ is a period of $x_2x_1$, we can write
\begin{equation*}
h(v_1)[q + |z^m| - |s|..q + |z^m| - 1] = s,
\end{equation*}
		or, in other words,
\begin{equation*}
s \preceq h(v_1).  
\end{equation*}
See Fig~\ref{fig2}.
		
\begin{figure}[H]
\begin{tikzpicture}[scale=1.0]
\begin{scope}
			\draw (1,.25) node[left]{$x_2x_1=$}++(0,-.25) rectangle node{$y_3$}+(.5,.5)
		 			++(.5,0) rectangle node{$h(v_2)$}+(3,.5)
		   			++(3,0) rectangle node{$y_4$}+(.5,.5)
			  		++(.5,0) rectangle node{$y_1$}+(.5,.5)
		  	  		++(.5,0) rectangle node{$h(v_1)$}+(5,.5)
		 	  		++(5,0) rectangle node{$y_2$}+(.5,.5);
		 	\draw[dashed] (4,0) rectangle node{$s$}(6,-.5);
		 	\draw[<->] (4,-1) --node[fill=white]{$|z^m|$} (8,-1);
		 	\draw[dashed] (8,0) rectangle node{$s$}(10,-.5);
		\end{scope}
		\end{tikzpicture}
		\caption{$h(v_1)$ contains a copy of $s$} \label{fig2}
		\end{figure}

		Using the fact that $h$ is synchronizing, we get that $y_4y_1 = h(a)$ for some $a \in \Sigma$. Since $h$ is an SSM, we have either $a = a_{i_1-1}$ or $a = a_{j_2+1}$. Without loss of generality, suppose that $a = a_{j_2+1}$. 
		Now look at the following factors of $h(w)$, which can be obtained from $x_1$ and $x_2$ by extending $x_2$ to the right and shrinking $x_1$ from the left:
		\begin{align*}
			x'_1 &= h(w[i_1..j_1])y_2\\
			x'_2 &= y_3h(w[i_2..j_2+1]). 
		\end{align*}
See Fig~\ref{fig3}. 
\begin{figure}[H]
\begin{tikzpicture}[scale=0.77]
\begin{scope}
				\draw[dashed] (2.25,.5) rectangle node{$x_1$} +(5,.5);	   			
				\draw[dashed] (8.75,.5) rectangle node{$x_2$} +(4.75,.5);
			
				\draw (1,.25)node[left]{$h(w)=$}++(0,-.25) rectangle +(.5,.5)
	      				++(.5,0) rectangle node{$h(a_{i_1-1})$}+(1.75,.5)
 			  			++(1.75,0) rectangle node{$h(a_{i_1}a_{i_1+1}\cdots a_{j_1})$}+(3.25,.5)
 			  			++(3.25,0) rectangle node{$y_2$}+(.75,.5) 
  				  		++(.75,0) rectangle node{$t$}+(1.5,.5)
  				  		++(1.5,0) rectangle node{$y_3$}+(.75,.5)
  				  		++(.75,0) rectangle node{$h(a_{i_2}a_{i_2+1}\cdots a_{j_2})$}+(3.25,.5)
  				  		++(3.25,0) rectangle node{$h(a_{j_2+1})$}+(1.75,.5)
  				  		++(1.75,0) rectangle +(.5,.5);

	   			\draw[dashed] (3.25,-.5) rectangle node{$x'_1$} +(4,.5);
	   			\draw[dashed] (8.75,-.5) rectangle node{$x'_2$} +(5.75,.5);
			\end{scope}
		\end{tikzpicture}
\caption{$x'_1$ and $x'_2$ are obtained from $x_1$ and $x_2$} 
\label{fig3}
We can see that $$x'_2x'_1=y_3h(w[i_2..j_2+1])h(w[i_1..j_1])y_2=y_3h(w[i_2..j_2])y_4y_1h(w[i_1..j_1])y_2=x_2x_1.$$
Now using case (1) we get a contradiction.
\end{figure}
			\end{enumerate}
		\item[(b)] Suppose that $x_2$ is too short
to contain an image of a word under $h$. More formally, we can write
$$ x_1 = y_1h(v)y_2 \text{ where } |x_2| < 2q \text{ and } |y_1|,|y_2|< q $$
for some words $y_1,y_2 \in \Gamma^*$ and a word $v \preceq w$. Then $h(v)$ is a factor of $x_2x_1 = z^\alpha$ of length $\geq 22q - 4q = 18q$, and so $$h(v) = z_1^\beta,$$ where $z_1$ is a conjugate of $z$, and $\beta \geq \frac{18}{22}\alpha > 2$. Note that $|z_1|=|z|\geq q$, so using Lemma~\ref{lemma:technicalLemma}, the word $v$ contains a square, a contradiction.
		
		\item[(c)] Suppose that $x_1$ is not long enough to contain an image of a word under $h$. An argument similar to (b) applies here to get a contradiction.
	\end{enumerate}	
	\qed
\end{proof}

The following 15-uniform morphism is an example of an SSM:
\begin{align*}
	\mu(0) &= 012102120102012\\
	\mu(1) &= 201020121012021\\
	\mu(2) &= 012102010212010\\
	\mu(3) &= 201210212021012\\
	\mu(4) &= 102120121012021\\
	\mu(5) &= 102010212021012.\\
\end{align*}

Another example of an SSM is the $4$-uniform morphism $\psi: \Sigma_6^* \rightarrow \Sigma_6^*$ as follows:
\begin{align*}
	\psi(0) &= 0435\\
	\psi(1) &= 2341\\
	\psi(2) &= 3542\\
	\psi(3) &= 3540\\
	\psi(4) &= 4134\\
	\psi(5) &= 4105.\\
\end{align*}

Our goal is to show that $\mu(\psi^\omega(0))$ is circularly $(\frac{13}{4})^+$-power-free. For this purpose, we first prove that $\psi^\omega(0)$ is circularly cubefree. Then we apply Lemma~\ref{lemma:mainlemma}, for $h = \mu$ and $w = \psi^\omega(0)$.

\begin{lemma}\label{lemma:squarefreeness}
	The fixed point $\psi^\omega(0)$ is squarefree. 
\end{lemma}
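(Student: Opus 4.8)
The plan is to reduce squarefreeness of the infinite word to a finite check, using the machinery already developed rather than arguing about arbitrarily long squares directly. Since $\psi$ is a $4$-uniform strongly synchronizing morphism and $\psi(0) = 0435$ begins with $0$, the fixed point $\psi^\omega(0)$ is well-defined, and Lemma~\ref{lemma:powersInSSM} applies verbatim with $h = \psi$, $q = 4$, and $n = 2$. That lemma states that if $\psi^\omega(0)$ contains no factor of the form $z^2$ with $|z^2| < 2nq = 16$, then $\psi^\omega(0)$ contains no squares at all. Thus the infinitary claim collapses to showing that $\psi^\omega(0)$ has no square $z^2$ with $|z| < 8$, i.e.\ no square of period $1 \le |z| \le 7$.

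First I would make the finite check precise. Every factor of $\psi^\omega(0)$ of length at most $15$ already occurs in some bounded prefix: because $\psi^\omega(0) = \lim_{k} \psi^k(0)$ and $\psi$ is uniform, the set of factors of length $\le 15$ is finite and stabilizes after finitely many iterations. Concretely, I would compute $\psi^k(0)$ for small $k$ (a prefix of a few hundred symbols is more than enough) and verify that the set of length-$\le 15$ factors no longer grows from one iterate to the next; at that point the factor set of the prefix coincides with that of the infinite word. I would then scan this prefix for squares of each period $1$ through $7$ and confirm that none occurs.

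The hard part, such as it is, will be purely the bookkeeping of this finite verification: there are seven period lengths to rule out across several hundred positions, which is tedious but entirely mechanical and is naturally delegated to a short computation. No conceptual difficulty remains once Lemma~\ref{lemma:powersInSSM} has bounded the length of a hypothetical shortest square. The one point demanding care is ensuring that the examined prefix genuinely exhibits every factor of length $\le 15$; I would justify this by checking factor-set stabilization between consecutive iterates rather than by guessing a prefix length, so that the reduction from Lemma~\ref{lemma:powersInSSM} is applied to the full infinite word and not merely to a finite initial segment.
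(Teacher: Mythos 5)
Your proposal is correct and follows essentially the same route as the paper: invoke Lemma~\ref{lemma:powersInSSM} with $h=\psi$, $q=4$, $n=2$ to reduce squarefreeness to the absence of squares $z^2$ with $|z^2|<16$, then dispose of that finite case by computer check. Your additional care about verifying that the inspected prefix realizes every factor of length at most $15$ (via stabilization of the factor set under iteration) is a sound way to make the finite verification rigorous, and merely spells out a detail the paper leaves implicit.
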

\begin{proof}
	Suppose that $\psi^\omega(0)$ contains a square. Using Lemma~\ref{lemma:powersInSSM}, there is a square $zz \preceq \psi^\omega(0)$ such that $|zz| < 16$. Using a computer program, we checked all factors of length smaller than $16$ in $\psi^\omega(0)$, and none of them is a square. This is a contradiction.
	\qed
\end{proof}

\begin{lemma}\label{lemma:circularcubeLemma}
	The fixed point $\psi^\omega(0)$ is circularly cubefree.
\end{lemma}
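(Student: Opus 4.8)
The plan is to argue by contradiction, mirroring the case analysis of Lemma~\ref{lemma:mainlemma} but now tracking a genuine circular \emph{cube} rather than a $(13/4)^+$-power. Write $W = \psi^\omega(0)$, so that $W = \psi(W)$ and $\psi$ is a strongly synchronizing $q$-uniform morphism with $q=4$. Suppose $W$ has a circular cube; by Proposition~\ref{pone} there is a factor $x_1 t x_2 \preceq W$ with $x_2 x_1 = z^3$. I fix such a counterexample minimizing $|z|=:p$. Since $W$ is squarefree by Lemma~\ref{lemma:squarefreeness}, circular cubes of small period are ruled out by a finite computation, so I may assume $p > 4q$; the task is then to produce either a genuine square (contradicting Lemma~\ref{lemma:squarefreeness}) or a circular cube of strictly smaller period (contradicting minimality).

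First I would dispose of the ``unbalanced'' configurations. If one of $x_1,x_2$ is too short to contain a full block $\psi(a)$ --- the analogues of cases (b) and (c) of Lemma~\ref{lemma:mainlemma} --- then the other factor contains an image $\psi(v)$ that is a factor of $z^3$ of length at least $3p-4q$. Because the exponent $3$ survives the loss of the $O(q)$ boundary as long as $p>4q$ (one has $(3p-4q)/p>2$), Lemma~\ref{lemma:technicalLemma} pulls this back to a genuine square inside the factor $v$ of $W$, contradicting squarefreeness. In the balanced case both $x_1,x_2$ contain images; if the wrap block $y_4y_1$ is nonempty I would absorb it via strong synchronization exactly as in case (a2) of Lemma~\ref{lemma:mainlemma}, reducing to the clean-wrap situation $x_1=\psi(v_1)y_2$, $x_2=y_3\psi(v_2)$, $z^3=y_3\,\psi(v_2v_1)\,y_2$.

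The clean-wrap case is where the real difficulty lies, and it is precisely here that a cube behaves differently from a $(13/4)^+$-power. Setting $V=v_2v_1$, one checks via Lemma~\ref{lemma:technicalLemma} (applied to $z_1^2\preceq_p \psi(Vc)$) that $q\mid p$; but the pulled-back word $V$ now has exponent $3-(|y_2|+|y_3|)/p<3$, so $V$ is \emph{not} a cube and the direct descent of Lemma~\ref{lemma:mainlemma} (which needed exponent $>13/4$ so as to leave exponent $>3$) fails. The resolution I propose is to recover the missing letter: from $q\mid p$ one gets $|y_2|+|y_3|\in\{0,q\}$, and in the nonzero case the length-$q$ word $y_2y_3$ caps off the block-aligned image $\psi(V)$, hence equals $\psi(e)$ for a single letter $e$. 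Since $y_2$ is a prefix of $\psi(c)$ and $y_3$ a suffix of $\psi(d')$ for the neighbouring letters $c,d'$, the hypothesis that $\psi$ is \emph{strongly} synchronizing forces $e=c$ or $e=d'$. Appending $c$ (respectively prepending $d'$) then turns $V$ into a word whose $\psi$-image is an honest cube $z_1^3$ (a conjugate of $z^3$); by injectivity of $\psi$ this word is $u_1^3$ with $|u_1|=p/q<p$, and because $v_1c$ (resp. $d'v_2$) is a genuine factor of $W$ it is a circular factor --- a strictly shorter circular cube, contradicting minimality. (When $|y_2|+|y_3|=0$ the factor $V$ itself is already such a cube.)

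It remains only to settle the base case $p\le 4q$ by machine, and here squarefreeness is what makes the check finite: a short-period circular cube $z^3=x_2x_1$ forces, after writing $z=\alpha\beta$, both $\alpha\beta\alpha$ and $\beta\alpha\beta$ to be factors of $W$, so it suffices to verify over the finitely many short $z$ that no such pair of factors occurs in $\psi^\omega(0)$. The step I expect to be the main obstacle is the clean-wrap descent of the third paragraph: the whole point is that exponent $3$ lies below the threshold $13/4$, so the cube cannot be descended directly, and one must instead reconstruct a full block at the seam and invoke strong synchronization to relocate it --- this is exactly why plain synchronization does not suffice and the SSM property is needed.
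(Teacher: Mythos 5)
Your proposal follows essentially the same route as the paper's own proof: a minimal counterexample, a finite machine check for short periods, the decomposition of $x_1,x_2$ into boundary words and full $\psi$-blocks, pullback through Lemma~\ref{lemma:technicalLemma}, identification of the seam with a single block via the SSM property, and descent to a smaller circular cube. Your clean-wrap analysis is a legitimate variant of the paper's case (a): the paper relocates the seam word $\psi(w[j_1])y_2y_3\psi(w[i_2])$ inside $\psi(v_2v_1)$ using the period and synchronization, whereas you first extract $q\mid p$, deduce $|y_2|+|y_3|\in\{0,q\}$, and then identify $y_2y_3$ with a block. Your phrase ``caps off the block-aligned image, hence equals $\psi(e)$'' does need one explicit line of justification --- for instance, in the conjugate cube $z_2^3=\psi(V)y_2y_3$, since $q\mid p$ the first period is $\psi(V[0..p/q-1])$, and comparing the last $q$ letters of the first and third periods gives $y_2y_3=\psi(V[p/q-1])$ --- but that is exactly the periodicity argument your wording gestures at, and it is correct. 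The descent step (appending $c$ or prepending $d'$ and invoking minimality of the period) then goes through.

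The genuine gap is quantitative: your cutoff $p>4q$ suffices for the unbalanced cases and for the clean-wrap case, but not for the balanced case with $y_4y_1\neq\epsilon$. There you must place a copy of $s=\psi(w[j_2])y_4y_1\psi(w[i_1])$, shifted by the period $p$, inside $\psi(v_1)$ or $\psi(v_2)$, which requires $|\psi(v_1)|\geq p+q$ or $|\psi(v_2)|\geq p+q$. In Lemma~\ref{lemma:mainlemma} this inequality came from $|x_2x_1|\geq 22q$; here you only have $|x_1|+|x_2|=3p$, so $\max(|x_1|,|x_2|)\geq\lceil 3p/2\rceil$ and $|\psi(v_i)|\geq |x_i|-2(q-1)$, which yields the needed bound only once $p$ is roughly $6q-4=20$. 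At $p=17$, for example, both images can have length exactly $20<p+q=21$ (with boundary words of total size $11$, each $\leq 3$), and the relocation argument breaks down; such periods are covered by neither your machine check nor your structural argument. This is precisely why the paper's computer search goes up to circular-cube length $66$ (periods up to $22$) rather than up to $3\cdot 4q=48$. The fix is trivial --- extend your finite verification to all $p<6q$, noting that your pair test on $\alpha\beta\alpha$ and $\beta\alpha\beta$ remains a sound (if conservative) way to keep that check finite --- but as written the proof has a hole.
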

\begin{proof}
	By contradiction. Let $w = a_0a_1a_2 \cdots = \psi^\omega(0)$. Suppose $x_1tx_2 \preceq w$, and $x_2x_1 = z^3$ for some words $t,x_1,x_2,z$, where 	
	\begin{align*}
		&x_1 = y_1\psi(w[i_1..j_1])y_2,\\
		&x_2 = y_3\psi(w[i_2..j_2])y_4,\\
		&y_1 \preceq_s \psi(a_{i_1-1}),\\
		&y_2 \preceq_p \psi(a_{j_1+1}),\\
		&y_3 \preceq_s \psi(a_{i_2-1}),\\
		&y_4 \preceq_p \psi(a_{j_2+1}),\\
		&|y_1|,|y_2|,|y_3|, \text{ and } |y_4| < 4,\\
		&y_2ty_3 = \psi(w[j_1+1..i_2-1]),
	\end{align*}
	for proper choices of the integers $i_1,i_2,j_1,j_2$. Let $v_1 = w[i_1..j_1]$ and $v_2 = w[i_2..j_2]$. 
	
	Using a computer program, we searched for circular cubes of length not greater than $66$, and it turns out that there is no such circular cube in $w$. Thus we can assume that $|x_2x_1| > 66$ so $|z| > 22$. Moreover suppose that $x_2x_1$ has the smallest possible length. 
	
	There are two cases to consider.
	\begin{enumerate}
	
	\item[(a)] Suppose that $y_4y_1 = \epsilon$.
	If $y_2y_3 = \epsilon$, then $\psi(v_2v_1) = z^3$. Using Lemma~\ref{lemma:technicalLemma}, we get that $v_2v_1$ contains a cube. Hence $w$ contains
	a smaller circular cube, a contradiction. 
		
		Suppose that $y_2y_3 \neq \epsilon$. Since $|y_3\psi(w[i_2])|,|\psi(w[j_1])y_2| < 8$ and $|z| > 22$, we can write 
		\begin{align*}
			&y_3\psi(w[i_2]) \preceq_p z, \\
			&\psi(w[j_1])y_2 \preceq_s z.
		\end{align*}
		Therefore $\psi(w[j_1])y_2y_3\psi(w[i_2]) \preceq z^3$, and since $\psi$ is synchronizing $$\psi(w[j_1])y_2y_3\psi(w[i_2]) \preceq \psi(v_2v_1).$$ Hence $y_2y_3 = \psi(b)$ for some $b \in \Sigma_6$. Since $\psi$ is an SSM, we have either $b = a_{i_2-1}$, or $b = a_{j_1+1}$. Without loss of generality, suppose that $b=a_{i_2-1}$. So we can write $$\psi(w[i_2-1..j_2]w[i_1..j_1]) = y_2y_3\psi(w[i_2..j_2]w[i_1..j_1]).$$ The word $y_2y_3\psi(v_2v_1)$ is a cube, since it is a conjugate of $y_3\psi(v_2v_1)y_2$. So we can write $$\psi(w[i_2-1..j_2]w[i_1..j_1])=z_1^3$$ where $z_1$ is a conjugate of $z$. Then using Lemma~\ref{lemma:technicalLemma}, the word $w[i_2-1..j_2]w[i_1..j_1]$ contains a cube. Note that since $y_2y_3 \neq \epsilon$ we have $j_1 < i_2-1$. Hence $w[i_2-1..j_2]w[i_1..j_1]$ is a circular cube of $w$, a contradiction.
	
	\item[(b)] Suppose that $y_4y_1 \neq \epsilon$. We show how to get two new factors $x'_1=h(v'_1)y'_2$ and $x'_2=y'_3h(v'_2)$ of $w$, for nonempty words $v'_1,v'_2$, such that $x'_2x'_1 = x_2x_1$. Then we use case (a) above to get a contradiction.
	
	The word $w$ is squarefree due to Lemma~\ref{lemma:squarefreeness}. Therefore $|x_1|,|x_2| > |z| > \frac{66}{3}$ and hence $|v_1|,|v_2| > 0$. One can observe that either $|\psi(v_1)| \geq 4 + |z|$ or $|\psi(v_2)| \geq 4 + |z|$. Without loss of generality, suppose that $|\psi(v_1)| \geq 4 + |z|$. Let $s = w[j_2]y_4y_1w[i_1]$. Now, using the fact that $z$ is a period of $x_2x_1$,
we can write
$$\psi(v_1)[4 + |z| - |s|..4 + |z| - 1] = s,$$ 
or, in other words, 
$$s \preceq \psi(v_1).$$
	
	Using the fact that $\psi$ is synchronizing, we get that $y_4y_1 = \psi(a)$ for some $a \in \Sigma_6$. Since $\psi$ is an SSM, we have either $a = a_{i_1-1}$, or $a = a_{j_2+1}$. Without loss of generality, suppose that $a = a_{j_2+1}$. 
	Now look at the following factors of $w$, which can be obtained from $x_1$ and $x_2$ by extending $x_2$ to the right and shrinking $x_1$ from the left
	\begin{align*}
		x'_1 &= \psi(w[i_1..j_1])y_2\\
		x'_2 &= y_3\psi(w[i_2..j_2+1]). 
	\end{align*}
	We can write
	\begin{equation*}
		x'_2x'_1 = y_3\psi(w[i_2..j_2+1])\psi(w[i_1..j_1])y_2 = y_3\psi(v_2)y_4y_1\psi(v_1)y_2 = x_2x_1 = z^3\label{eq:(x'2x'1)thesecondtime}.
	\end{equation*} So using case (a) we get a contradiction. 
	\end{enumerate}
	\qed
\end{proof}

\begin{theorem}
	$\rtc(3) = \frac{13}{4}$.
\end{theorem}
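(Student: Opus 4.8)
The goal is to establish $\rtc(3)=\frac{13}{4}$, and the plan is to prove matching lower and upper bounds, combining them exactly as the binary case was handled.

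\medskip

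\noindent\textbf{The lower bound $\rtc(3)\geq\frac{13}{4}$.}
The plan is to show that \emph{every} infinite word $w\in\Sigma_3^\omega$ contains a circular $(\frac{13}{4})$-power, so that no word can avoid $(\frac{13}{4})^+$-powers with room to spare below $\frac{13}{4}$; more precisely, one argues that for every $w$ there are circular powers of exponent at least $\frac{13}{4}$, forcing $\cexp(w)\geq\frac{13}{4}$ and hence $\rtc(3)\geq\frac{13}{4}$. The natural route is an exhaustive/combinatorial argument in the spirit of Lemma~\ref{lemma:2letter}: since $\rt(3)=\frac{7}{4}$, any infinite ternary word must contain $\frac{7}{4}$-powers, and one leverages Theorem~\ref{thm:bound} together with a finer analysis of how squares and short repetitions are forced in ternary words. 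I expect this direction to rely on a finite case check (a backtracking search over ternary words up to some bounded length) showing that avoiding circular $(\frac{13}{4})$-powers is impossible beyond a certain length; this is where the real work lies, since the abstract bound $1+\rt(3)=\frac{11}{4}$ from Theorem~\ref{thm:bound} is strictly weaker than $\frac{13}{4}$ and must be improved by a problem-specific argument.

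\medskip

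\noindent\textbf{The upper bound $\rtc(3)\leq\frac{13}{4}$.}
Here the plan is to exhibit one explicit infinite ternary word that is circularly $(\frac{13}{4})^+$-power-free, namely $\mu(\psi^\omega(0))$, exactly as foreshadowed in the text. First I would invoke Lemma~\ref{lemma:squarefreeness} and Lemma~\ref{lemma:circularcubeLemma} to record that $w=\psi^\omega(0)$ is both squarefree and circularly cubefree, which are precisely the hypotheses demanded by Lemma~\ref{lemma:mainlemma}. Applying Lemma~\ref{lemma:mainlemma} with $h=\mu$ (a strongly synchronizing $15$-uniform morphism) and this $w$, every circular $(\frac{13}{4})^+$-power $x_2x_1$ occurring in $\mu(w)$ must satisfy $|x_2x_1|<\cons\,q = 22\cdot 15 = 330$. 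It then remains only to rule out \emph{short} circular $(\frac{13}{4})^+$-powers: by a finite computation one checks every factor of conjugates of factors of $\mu(w)$ of length below $330$ (equivalently, by Proposition~\ref{pone}, every $s=vt$ with $tuv\preceq\mu(w)$ and $|s|<330$) and verifies that none is a $(\frac{13}{4})^+$-power. Since $\mu(w)$ is itself infinite over $\Sigma_3=\{0,1,2\}$ (the letters of $\mu$ lie in $\{0,1,2\}$), this produces the required circularly $(\frac{13}{4})^+$-power-free word.

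\medskip

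\noindent\textbf{Conclusion and main obstacle.}
Combining the two bounds gives $\rtc(3)=\frac{13}{4}$. The main obstacle is the lower bound: the generic Theorem~\ref{thm:bound} only yields $\frac{11}{4}$, so closing the gap up to $\frac{13}{4}$ demands a dedicated argument specific to the ternary case, almost certainly a verified finite search establishing that circular $(\frac{13}{4})$-powers are unavoidable. On the upper-bound side the conceptual heavy lifting has already been done inside Lemma~\ref{lemma:mainlemma}; the only remaining risks are bookkeeping ones, namely confirming that $\mu$ is genuinely strongly synchronizing and $15$-uniform, that the finitely many short circular factors are all checked against the exponent threshold $\frac{13}{4}$ (taking care to use the strict inequality defining a $(\frac{13}{4})^+$-power), and that the composition $\mu\circ\psi^\omega$ indeed yields an infinite ternary word to which the lemma applies.
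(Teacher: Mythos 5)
Your upper bound is essentially the paper's own argument, and it is complete modulo the finite computation: you invoke Lemma~\ref{lemma:squarefreeness} and Lemma~\ref{lemma:circularcubeLemma} to get that $\psi^\omega(0)$ is squarefree and circularly cubefree, apply Lemma~\ref{lemma:mainlemma} with $h=\mu$ and $q=15$ to cap the length of any circular $(\frac{13}{4})^+$-power in $\mu(\psi^\omega(0))$ at $22\cdot 15=330$, and dispose of the finitely many short candidates by computer. This matches the paper exactly, including the observation that the check can be phrased via Proposition~\ref{pone}.

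The genuine gap is in the lower bound, precisely where you concede ``the real work lies.'' You propose an unrestricted backtracking search showing that circularly $\frac{13}{4}$-power-free ternary words cannot exceed some length. Two problems. First, no such length or search is actually supplied, and the termination of that unrestricted search is (by K\"onig's lemma) essentially equivalent to the statement being proved, so positing that it halts is just restating the theorem as an unperformed computation. Second, the unrestricted search is the wrong computation to attempt: words avoiding circular $\frac{13}{4}$-powers may freely contain squares and even cubes (exponents $2$ and $3$ are below $\frac{13}{4}$), and such repetitions can be front-loaded into a finite word, so the search tree is vastly larger than the feasible one --- this is presumably why the paper does not run it. The idea your sketch is missing is the paper's reduction that makes the computation tractable and the argument rigorous: if an infinite $w$ avoids circular $\frac{13}{4}$-powers, then it avoids circular $4$-powers, hence contains no factor of the form $xxyxx$; consequently every square occurs only finitely often in $w$, and some suffix $w'$ of $w$ avoids \emph{all} squares $xx$ with $|xx| < 147$ while still avoiding circular $\frac{13}{4}$-powers. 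The feasible search is then over words subject to both constraints simultaneously; it terminates with maximal length $147$, contradicting the existence of the infinite $w'$. Without this square-suppression step (or some substitute for it), your lower bound remains a plan rather than a proof.
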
	
\begin{proof}
	First let us show that $\rtc(3) \geq \frac{13}{4}$. 
	
Suppose there exists an infinite word $w$ that avoids circular
$\alpha$-powers, for $\alpha < 4$. We now argue that for every integer $C$,
there exists an infinite word $w'$ that avoids both squares of length $\leq C$
and
circular $\alpha$-powers. Note that none of the factors of $w$ looks
like $xxyxx$, since $w$ avoids circular $4$-powers. Therefore, every
square in $w$ occurs only finitely many times. Therefore $w'$ can be
obtained by removing a sufficiently long prefix of $w$.

Computer search verifies that the longest circularly $\frac{13}{4}$-power-free word over a $3$-letter
alphabet that avoids squares $xx$ where $|xx| < 147$ 
has length $147$. Therefore the above argument for $C = 147$ shows that circular $\frac{13}{4}$-powers are
unavoidable over a $3$-letter alphabet.

Now to prove $\rtc(3) = \frac{13}{4}$, it is sufficient to give an example
of an infinite word that avoids circular $(\frac{13}{4})^+$-powers. We
claim that $\mu(\psi^\omega(0))$ is such an example. We know that
$\psi^\omega(0)$ is circularly cubefree. Therefore we can use Lemma
\ref{lemma:mainlemma} for $w = \psi^\omega(0)$ and $h = \mu$. So if $xty
\preceq \mu(\psi^\omega(0))$, and $yx$ is a $(\frac{13}{4})^+$-power,
then $|yx| < \cons \times 15$.
Now there are finitely many possibilities for $x$ and $y$. 
Using a computer program, we checked that none of them leads to a $(\frac{13}{4})^+$-power.
This completes the proof.
\qed	
\end{proof}

\section{Another Interpretation}\label{sec:another interpretation}

We could, instead, consider the
supremum of $\exp(p)$ over all products of
$i$ factors of $w$.    Call this quantity $\pexp_i (w)$.  

\begin{proposition}\label{lemma:recurrent words}
If $w$ is a recurrent infinite word, then $\pexp_2 (w) = \cexp (w)$.
\end{proposition}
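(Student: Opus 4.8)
The plan is to prove the two inequalities $\pexp_2(w) \le \cexp(w)$ and $\cexp(w) \le \pexp_2(w)$ separately, using Proposition~\ref{pone} to translate between the ``circular factor'' description and the ``product of two factors'' description, and invoking recurrence only where it is actually needed.

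First I would establish $\pexp_2(w) \le \cexp(w)$. Let $p = fg$ be a product of two factors $f,g$ of $w$; I want to exhibit $p$ as a factor of a conjugate of a factor of $w$, so that $\exp(p) \le \cexp(w)$ by Definition~\ref{done}. This is exactly where recurrence enters. Since $w$ is recurrent, the factor $f$ occurs infinitely often, so in particular there is an occurrence of $f$ somewhere to the right of some fixed occurrence of $g$; thus $w$ contains a factor of the form $g\,u\,f$ for some word $u$ (the material between the chosen occurrence of $g$ and a later occurrence of $f$). Now $g u f$ is a factor of $w$, and by the equivalence (d)$\implies$(a) of Proposition~\ref{pone} — taking $t=g$, the ``$u$'' of that proposition equal to $u$, and $v=f$ — the word $vt = fg = p$ is a factor of a conjugate of a factor of $w$. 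Hence $\exp(p) \le \cexp(w)$, and taking the supremum over all such products $p$ gives $\pexp_2(w) \le \cexp(w)$.

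Next I would establish $\cexp(w) \le \pexp_2(w)$, and this direction needs \emph{no} recurrence. Let $s$ be any factor of a conjugate of a factor of $w$. By Proposition~\ref{pone}, the equivalence (a)$\iff$(d) gives $s = vt$ where $tuv \preceq w$. But then $t$ is a factor of $w$ and $v$ is a factor of $w$, so $s = vt$ is literally a product of two factors of $w$. Therefore $\exp(s) \le \pexp_2(w)$, and taking the supremum over all such $s$ yields $\cexp(w) \le \pexp_2(w)$. Combining the two inequalities gives $\pexp_2(w) = \cexp(w)$.

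The main obstacle — and the only place the hypotheses really bite — is the recurrence step in the first inequality. Without recurrence, a product $fg$ of two factors need not correspond to any single circular factor: the occurrences of $f$ and $g$ in $w$ could be ``out of order'' (every occurrence of $f$ might lie to the left of every occurrence of $g$), in which case there is no factor $guf$ witnessing $fg$ as a circular factor, and one would only have $\cexp(w) \le \pexp_2(w)$ in general. Recurrence is precisely what guarantees that $f$ recurs arbitrarily far to the right, so that a block $guf$ always exists. I would take care to state this reduction cleanly, and to note that supremum-versus-maximum subtleties are harmless since both quantities are defined as suprema of $\exp$ over the relevant families of words.
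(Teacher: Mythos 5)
Your proposal is correct and follows essentially the same route as the paper's own proof: recurrence is used to place an occurrence of the first factor after the second, producing a factor $guf$ of $w$ whose conjugate exhibits $fg$ as a circular factor, and the reverse inequality follows directly from the equivalence (a)$\iff$(d) of Proposition~\ref{pone}. The only difference is presentational — you spell out the supremum bookkeeping and the role of recurrence more explicitly than the paper does.
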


\begin{proof}
Let $s$ be a product of two factors of $w$,
say $s = xy$.
Let
$y$ occur for the first time at position $i$ of $w$.  Since $w$ is
recurrent, $x$ occurs somewhere after position $i + |y|$ in $w$.
So there exists $z$ such that $yzx$ is a factor of $w$.  Then
$xy$ is a factor of a conjugate of a factor of $w$.

On the other hand, from Proposition~\ref{pone}, we know that if 
$s$ is a conjugate of a factor of $w$, then $s = vt$ where
$tuv$ is a factor of $w$.  Then $s$ is the product of two factors of $w$.
\qed
\end{proof}

We can now study the repetition threshold for $i$-term products,
$\rt_i (k)$,
which is
the infimum of $\pexp_i (w)$ over all words $w \in \Sigma_k^\omega$. Note that $$\rt_2(k) \geq \rtc(k).$$

The two recurrent words, the Thue-Morse word and $\mu(\psi^\omega(0))$, introduced in 
Section \ref{sec:3letter}, are circularly $\rtc(2)^+$-power-free 
and circularly $\rtc(3)^+$-power-free, respectively. 
Using Proposition \ref{lemma:recurrent words}, we get that $\rt_2(k)=\rtc(k)$ for $k=2,3$.

\begin{theorem} For $i \geq 1$ we have
$\rt_i (2) = 2i$.
\end{theorem}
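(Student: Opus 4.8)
The plan is to prove the two inequalities $\rt_i(2) \ge 2i$ and $\rt_i(2) \le 2i$ separately, with the Thue-Morse word serving as the optimal word witnessing the upper bound, and the infimum.

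For the lower bound, I would show that $\pexp_i(w) \ge 2i$ for \emph{every} infinite binary word $w$, which immediately gives $\rt_i(2) = \inf_w \pexp_i(w) \ge 2i$. The key observation is that every binary word of length at least $4$ contains a square $uu$, since the longest square-free binary words (namely $010$ and $101$) have length $3$. Taking such a square $uu$ as a single factor of $w$ and forming the product of $i$ copies of it yields $(uu)^i = u^{2i}$, a product of $i$ factors whose exponent is at least $2i$ (its shortest period has length at most $|u|$). Hence $\pexp_i(w) \ge 2i$ for all $w$, as desired.

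For the upper bound I would isolate a general bound, valid over any alphabet: if $w$ avoids $r^+$-powers, then $\pexp_i(w) \le ir$. Suppose $p = f_1 f_2 \cdots f_i$ is a product of $i$ factors of $w$ with shortest period $z$, so that $p = z^\alpha$ with $\alpha = \exp(p)$. Each $f_j$ is a contiguous block of $p$ and is therefore a factor of the purely periodic word $z^\omega$; a standard property of periodic words gives $\exp(f_j) \ge |f_j|/|z|$. On the other hand, $f_j$ is a factor of $w$, so the $r^+$-power-freeness of $w$ forces $\exp(f_j) \le r$, whence $|f_j| \le r|z|$. Summing over $j$ gives $\alpha|z| = |p| = \sum_j |f_j| \le ir|z|$, so $\alpha \le ir$. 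Since the Thue-Morse word $\mathbf{t}$ is overlap-free, hence $2^+$-power-free, applying this bound with $r = 2$ yields $\pexp_i(\mathbf{t}) \le 2i$, and therefore $\rt_i(2) \le 2i$. Combining the two bounds gives $\rt_i(2) = 2i$, and shows the infimum is attained by $\mathbf{t}$.

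The main obstacle is the general upper-bound lemma, and within it the step asserting $\exp(f_j) \ge |f_j|/|z|$: I must verify that each factor in the product genuinely inherits the period $|z|$ from $p$ and that the minimal period of $f_j$ is at most $|z|$, handling the degenerate case $|f_j| < |z|$ separately (there the inequality holds trivially, since $\exp(f_j) \ge 1 \ge |f_j|/|z|$). Everything else is elementary, and the lower bound requires no machinery beyond the bound on square-free binary words.
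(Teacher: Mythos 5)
Your proof is correct and is essentially the paper's argument: the upper bound comes from Thue's $2^+$-power-free word together with the observation that a product of $i$ factors whose exponent exceeds $2i$ would force some single factor of $w$ to have exponent exceeding $2$ (you spell out the averaging step $\sum_j |f_j| \leq ir|z|$ and the degenerate case $|f_j| < |z|$ that the paper leaves implicit, and you state the lemma for general $r$), and the lower bound comes from taking a square $uu$ occurring in $w$ and forming the product of $i$ copies of it, which has exponent at least $2i$. The only real difference is how the square is located: you invoke the fact that every binary word of length $4$ contains a square, whereas the paper argues via the dichotomy that either $00$ or $11$ occurs infinitely often or $w$ ends in $(01)^\omega$; both are routine.
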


\begin{proof}
From Thue we know there exists an infinite $2^+$-power-free word.
If some product of factors $x_1 x_2 \cdots x_i$ contains a $(2i)^+$-power, then
some factor contains a $2^+$-power, a contradiction.  So $\rt_i (2) \leq 2i$.

For the lower bound, fix $i \geq 2$, and let $w \in \Sigma_2^\omega$ be
an arbitrary word.  Either $00$ or $11$ appears infinitely often, or 
$w$ ends in a suffix of the form $(01)^\omega$.  In the latter case we
get arbitrarily high powers, and the former case there is
a product of $i$ factors with exponent $2i$.
\qed
\end{proof}

It would be interesting to obtain more values of $\rt_i(k)$.
We propose the following conjectures which are supported by numerical evidence:
\begin{align*}
	&\rt_2(4) = \rtc(4) = \frac{5}{2} \text { , }\\ 
	&\rt_2(5) = \rtc(5) = \frac{105}{46} \text { , and }\\ 
	&\rt_2(k) = \rtc(k) = 1 + \rt(k) = \frac{2k-1}{k-1} \text{ for } k \geq 6.\\
\end{align*}
We know that the values given above are lower bounds for $\rtc(k)$.

\end{document}